\newtheorem{corollary}{Corollary}[]
\newtheorem{lemma}[]{Lemma}
\newtheorem{rmk}{Remark}[]
\begin{document}
%\bstctlcite{IEEEexample:BSTcontrol}
	%
	% paper title
	% Titles are generally capitalized except for words such as a, an, and, as,
	% at, but, by, for, in, nor, of, on, or, the, to and up, which are usually
	% not capitalized unless they are the first or last word of the title.
	% Linebreaks \\ can be used within to get better formatting as desired.
	% Do not put math or special symbols in the title.
	\title{Positioning Data-Rate Trade-off in mm-Wave Small Cells and Service Differentiation for 5G Networks}
	\author{$^*$Gourab Ghatak$^{1,2}$, $^*$Remun Koirala$^{1,3,4}$, Antonio De Domenico$^1$, Benoit Denis$^1$, Davide Dardari$^4$, and Bernard Uguen$^3$
     \\ \small{ $^1$CEA, LETI, MINATEC, F-38054 Grenoble,
France; $^2$LTCI, Telecom ParisTech, Universit\'e Paris Saclay, France; $^3$University of Rennes 1-IETR (CNRS UMR 6164), Rennes, France; $^4$DEI, University of Bologna, Cesena (FC), Italy}
\vspace*{-0.0cm}}
		\maketitle
	\begin{abstract}
We analyze a millimeter wave network, deployed along the streets of a city, in terms of positioning and downlink data-rate performance, respectively. First, we present a transmission scheme where the base stations provide jointly positioning and data-communication functionalities. Accordingly, we study the trade-off between the localization and the data rate performance based on theoretical bounds. Then, we obtain an upper bound on the probability of beam misalignment based on the derived localization error bound. Finally, we prescribe the network operator a scheme to select the beamwidth and the power splitting factor between the localization and communication functions to address different quality of service requirements, while limiting cellular outage. 
	\end{abstract}
    \vspace*{-0.0cm}
\section{Introduction}
{\let\thefootnote\relax\footnote{{$^*$These authors have equal contribution in the work.\\
The research leading to these results has received funding from European Union under grant n. 723247 and supported by the Institute for Information \& communications Technology Promotion (IITP) grant funded by the Korea government (MSIP) (No.B0115-16-0001, 5GCHAMPION).}}}
To address the multi-fold increase in the demand for data rates, exploitation of higher frequency spectrum in the millimeter wave (mm-wave) range is gaining popularity~\cite{rappaport2013millimeter}. However, mm-wave communication is characterized by high path loss and sensitivity to blockages. To solve these problems, beam-forming techniques are utilized with the help of highly directional antennas, which result in new issues in terms of coverage and initial access~\cite{li2016initial}. Moreover, beam-alignment errors between the base stations (BSs) and the user equipments (UEs) degrade the communication performance.
One solution to this problem consists of enabling UEs to simultaneously receive signals in the mm-wave and in the sub-6GHz band, and to use the latter to support the initial access on the mm-wave band~\cite{ghatak2017coverage}. 
Another approach exploits positioning algorithms to support the UE cell discovery and access to mm-wave BSs. On the one hand, with fine-tuned positioning, the beam-alignment procedure is quickened, and beamforming and user tracking are improved~\cite{garcia2016location}. On the other hand, improved mm-wave beam-forming can be used for more accurate localization and orientation of nodes~\cite{Destino2017on}. 

In addition to the high speed data rates, the fifth generation (5G) cellular networks anticipate an explosion of new services, characterized by heterogeneous requirements. 
We investigate a mm-wave network deployed for supporting positioning and broadband functionalities simultaneously, e.g., in vehicle-to-infrastructure communication. Specifically, we study the trade-off between positioning efficiency and downlink data rates and accordingly, we prescribe the operator an algorithm to tune the mm-wave BS transmit power so as to meet specific quality of service (QoS) requirements of different services.
%  Given suitable signal processing mechanisms, the position and orientation of the users relative to a sub-6GHz BS can be determined (see e.g.,~\cite{kangas2013angle}). If sub-6GHz and mm-wave BS are co-located, or their position and orientation relative to one another are known, the coarse-grained angle information for beamtraining of the mm-wave RF front-end can be derived easily, which significantly speeds up the initial access procedure.
 \vspace*{-0.3cm}
\subsection{Related Work}
 \vspace*{-0.2cm}
In the context of sub-6GHz systems, Jeong et al.~\cite{jeong2015beamforming} have studied a distributed antenna system providing both data communication and positioning functionalities. The authors assumed that the UEs know the positions of the BSs and attempt to estimate their own positions based on the received signals. Lemic et al.~\cite{lemic2016localization} have shown that localization using mm-wave frequencies is efficient in terms of accuracy, even in the presence of a limited number of anchor nodes. In fact, mm-wave beam-forming allows for accurate localization and orientation of UEs with respect to the BSs~\cite{Destino2017on}. Garcia et al.~\cite{garcia2016location} have studied a location-aided initial access strategy for mm-wave networks, in which the information of UE locations enables to speed up the channel estimation and beam-forming procedures. Destino et al.~\cite{Destino2017on} have studied the trade-off between communication rate and positioning quality in a single user mm-wave link. Similarly Koirala et al.~\cite{koirala17localization} have studied the beamforming optimization and spectral power allocation based on theoretical localization bounds.

The downlink communication performance in random wireless networks is typically characterized by signal to interference and noise ratio (SINR) coverage probability and rate coverage probability, using stochastic geometry~\cite{bai2015coverage}. For this, the positions of the BSs are modeled using homogeneous Poisson point process (PPP)~\cite{elshaer2016downlink} or using repulsive point processes~\cite{chun2015modeling}. Recently, Ghatak et.al.~\cite{8254855} investigated a more realistic scenario, where mm-wave BSs are deployed along the roads of a city. We use this model in this paper, and accordingly we study a one dimensional setting where the BSs and the served users are assumed to be on the same street. 

Specifically, leveraging on the tools of stochastic geometry, we present an average characterization of the localization and communication performance of this network, by exploiting the a-priori knowledge about the distribution of the distances of the users from the BSs. We analyze the positioning and data communication trade-off, and provide the operator with a power control scheme designed to satisfy distinct QoS requirements of the positioning and the communication functions.
 \vspace*{-0.2cm}
\subsection{Contributions and Organization}
The main contributions of this paper are:
\begin{itemize}
\item We characterize a noise-limited mm-wave system designed to support positioning and broadband services simultaneously by partitioning the BS transmit power. First, we obtain the Cramer-Rao lower bound (CRLB) for the estimation of the distance of a typical UE from its serving BS. Subsequently, we obtain the signal to noise ratio (SNR) and rate coverage probability of the typical user, as a function of the power splitting factor.
\item Leveraging on the derived CRLB for the estimation of the distance, we obtain an upper bound on the probability of beam-misalignment. Based on this, we compute the minimum antenna beamwidth that limits the beam-misalignment.
\item Finally, we analyze the trade-off between the positioning and the data rate performance of the typical user. Accordingly, we prescribe the operator with a scheme to select the proper power splitting factor to support different QoS requirements. Specifically, we study our mm-wave system under different operating beamwidths, and analyze the distribution of the total transmit power for maximizing either the positioning efficiency or the UE data-rate.
\end{itemize}
The rest of the paper is organized as follows. In Section~\ref{sec:SM}, we introduce our system model and outline the performance objectives. In Section~\ref{sec:MR}, we derive our main results on the positioning error, the rate coverage, and the misalignment error. We provide some numerical results in Section~\ref{sec:NRD}, and accordingly present our power partitioning scheme. Finally, the paper concludes in Section~\ref{sec:Con}.
\vspace*{-0.2cm}
\section{System Model}
\label{sec:SM}
We consider an urban scenario, with multi-storied buildings that result in a dense blocking environment. In this scenario we analyze a mm-wave network consisting of BSs deployed along the streets of the city.
\vspace*{-0.3cm}
\subsection{Network Geometry}
The positions of the BSs in each street are modeled as points of a one-dimensional Poisson point process (PPP) $\phi$, with intensity $\lambda$ [m$^{-1}$]. Each BS is assumed to be of known height $h_B$ and equipped with directional antennas with beamwidth $\theta$. Let the corresponding product of the directivity gains of the transmitting and receiving antennas be $G_0$. The transmit power of the BSs is assumed to be $P$. Without loss of generality we perform our analysis from the perspective of a typical user located at origin, which associates with the BS that provides the highest downlink power. 
Accordingly, the distribution of the distance $d$ of the typical user from the serving BS is given by~\cite{chiu2013stochastic}:
\begin{align}
f_d(x) = 2\lambda\exp(-2\lambda x)
\label{eq:dist}
\end{align}
Furthermore, we assume that the network is equipped with efficient interference management capabilities (e.g., spatio-temporal frequency reuse), so that the performance of the users is noise-limited\footnote{Although the assumption of the network being noise-limited simplifies the analysis, Singh et al.~\cite{5733382} have shown the validity of this assumption in outdoor mm-Wave mesh networks. In a future work, we will extend the analysis by considering interfering BSs.}.
\vspace*{-0.2cm}
\subsection{Path-loss}
\vspace*{-0.1cm}
Due to the low local scattering, we consider a Nakagami fading for mm-wave communications~\cite{7593259} with parameter $n_0$ and variance equal to 1. Furthermore, we assume a path loss model where the power at the origin received from a BS located at a distance $d$ is given by $P_r = K\cdot P \cdot g \cdot G_0 \cdot (d^2 + h_B^2)^{\frac{-\alpha}{2}}$, where $K$ is the path loss coefficient, $g$ represents the fast-fading, and $\alpha$ is the path loss exponent. Thus, the average SNR can be written as $\frac{K\cdot P \cdot G_0 \cdot (d^2 + h_B^2)^{\frac{-\alpha}{2}}}{N_0 \cdot B}$. $N_0$ and $B$ are the noise power density and the operating bandwidth, respectively.
\vspace*{-0.5cm}
\subsection{Transmission Policy}
We assume a communication scheme where the transmit power of the BSs is divided into two parts: one associated with positioning and the other allotted for data communication. The power allocated for localization determines the number of control symbols used for this function, whereas the remaining power is utilized for control and data symbols of the communication phase. We acknowledge that it is possible to utilize the native communication signal for positioning services. However, we use dedicated waveforms designed for better localization performance (e.g., see \cite{damman16optimizing} for a discussion on localization specific waveforms). Hence, splitting of the transmit power becomes necessary to characterize and optimize the operating trade-off between communication and localization functionalities.
% \textcolor{red}{Despite the fact that it is possible to use the communication signal for localization purposes as well, for our purposes we consider the requirement of a separate localization oriented signal simultaneously transmitted for better localization performance and hence the requirement for power splitting. For communication applications, it is advantageous to have waveforms designed with properties such as high spectral efficiency and low latency for achieving high data rate and low bit rate error \cite{wunder145Gnow} \cite{schaich14waveform}, whereas for waveforms  oriented for localization, waveforms with ideal autocorrelation functions to be robust against multipath propagation for instance is opted \cite{damman16optimizing}. }
Accordingly, if the total transmit power is $P$, and $\beta$ is the fraction of power used for data services, the corresponding transmit power for localization is $P_L = (1 - \beta)P$. Consequently, the transmit power for data service is $P_D = \beta P$. Let the SNR for the distance estimation and the data communications phases be represented by $SNR_1$ and $SNR_2$, respectively.
\vspace*{-0.2cm}
\section{Positioning Error, Data Rate Coverage and Misalignment Error}
\label{sec:MR}
In this section, we first characterize the minimum variance of the error in the estimation of the distance of the typical user from the serving BS. Then, we derive the SNR coverage and the rate coverage probabilities. 
\vspace*{-0.2cm}
\subsection{Distance Estimation Analysis}
% , the received signal at the user is:
% \begin{equation}\label{eq:rss_model}
% 	y = \frac{\sqrt{KG_0P_L}}{c \tau} x\left(t - \tau\right) + n(t),
% \end{equation}
To simplify our analysis, we only consider the effect of the distance on the power of the received signal (for instance, we consider Received Signal Strength Indicator (RSSI) based ranging algorithms), and ignore the effect of the distance on the phase~\cite{wu2012fila}. Accordingly, the received signal is:
\begin{equation}
	y(t) = \frac{\sqrt{KG_0P_L}}{(h_B^2+d^2)^{\frac{\alpha}{4}}} x\left(t\right) + n(t),
\end{equation}
where $n(t)$ is a zero mean additive white Gaussian noise resulting in estimation errors.
% The variance in the estimation of $d$ and $\tau$ are thus related as $var(d) = c^2 var(\tau)$. 
% Thus, in what follows, we first obtain the Bayesian CRLB  for the estimation of $\tau$ and use it to obtain the BCRLB for the estimation of $d$.
%With the above model, we can estimate the variance on $d$ with the help of BCRLB.  
\begin{lemma}
The expected value of the Fisher information for the estimation of the distance ($d$) is calculated as:
\begin{align}
J_D = \frac{KG_0P_L  2 \lambda \bar{f^2}}{\sigma_N^2}  \int_1^\infty \frac{e^{-2 \lambda x}}{(h_B^2+x^2)^{\frac{\alpha}{2}}} d x,
\end{align}
where $\bar{f^2} = 1.25\pi^2B^2$.
Furthermore, the prior information is:
%\begin{align}
$J_p = \log\left(2\lambda \right) - 1$.
%\end{align}
\label{lem:FIM}
\end{lemma}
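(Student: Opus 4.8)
The plan is to obtain $J_D$ as the average, over the random link distance $d$, of the conditional Fisher information $J(d)$ carried by a single observation of $y(t)$, and to treat $J_p$ separately as the information already contained in the prior $f_d$. First I would fix $d$ and view $y(t)=A(d)x(t)+n(t)$ as the known signaling waveform scaled by $A(d)=\sqrt{KG_0P_L}\,(h_B^2+d^2)^{-\alpha/4}$ and corrupted by white Gaussian noise of level $\sigma_N^2$. Writing the log-likelihood $\ln L \propto -\tfrac{1}{2\sigma_N^2}\int\!\big(y(t)-A(d)x(t)\big)^2\,dt$ and differentiating with respect to the range parameter, the conditional Fisher information collapses to the standard ranging form, in which the waveform enters only through the energy of its derivative.

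The key step is to convert that time-domain derivative into a spectral quantity by Parseval's relation, $\int \dot{x}^2(t)\,dt=\int (2\pi f)^2|X(f)|^2\,df$, so that the pilot contributes solely through its (suitably normalized) mean-square bandwidth $\bar{f^2}$, which for the spectrum used here evaluates to $\bar{f^2}=1.25\pi^2B^2$. Noting that the received localization power is $A^2(d)=KG_0P_L\,(h_B^2+d^2)^{-\alpha/2}$, the conditional information reads
\begin{equation}
J(d)=\frac{KG_0P_L\,\bar{f^2}}{\sigma_N^2\,(h_B^2+d^2)^{\alpha/2}}.
\end{equation}
It then remains to remove the conditioning by averaging against the link-distance law $f_d(x)=2\lambda e^{-2\lambda x}$ of \eqref{eq:dist}, taken above the minimum reference distance ($x\ge 1$):
\begin{equation}
J_D=\int_1^\infty J(x)\,f_d(x)\,dx=\frac{KG_0P_L\,2\lambda\,\bar{f^2}}{\sigma_N^2}\int_1^\infty\frac{e^{-2\lambda x}}{(h_B^2+x^2)^{\alpha/2}}\,dx,
\end{equation}
which is the claimed expression; I would leave the remaining integral in closed form, as it admits no elementary antiderivative.

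For the prior information I would read $J_p$ as the negative differential entropy of the prior, i.e.\ the information about $d$ available before any measurement. Since $f_d$ is exponential with rate $2\lambda$, a direct computation gives $h(f_d)=-\int_0^\infty f_d(x)\ln f_d(x)\,dx=1-\ln(2\lambda)$, whence $J_p=-h(f_d)=\ln(2\lambda)-1$. The main obstacle is the conditional step: justifying that the range enters the Fisher information only through the received power and the mean-square bandwidth, so that the derivative-energy expression reduces cleanly to $A^2(d)\,\bar{f^2}/\sigma_N^2$, and pinning down the constant $1.25\pi^2B^2$ for the particular signaling spectrum and energy normalization adopted. Once $J(d)$ is in this form, both the expectation yielding $J_D$ and the entropy computation yielding $J_p$ are routine.
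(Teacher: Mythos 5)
Your overall architecture is exactly the paper's: compute the conditional Fisher information for fixed distance, average it against $f_d(x)=2\lambda e^{-2\lambda x}$ to obtain $J_D$, and compute the prior term as $J_p=\mathbb{E}\left[\log f_d(d)\right]=\log(2\lambda)-1$, i.e.\ the negative differential entropy of the exponential prior. Those last two steps of yours are correct and identical to the paper's (including the lower limit $1$ in the integral, which you rationalize as a minimum reference distance and the paper writes without comment). The difference is in the first step: the paper does not derive the conditional Fisher information at all --- it quotes the standard delay-estimation result from van Trees, $J_d=\frac{KG_0P_L}{(h_B^2+d^2)^{\alpha/2}\sigma_N^2}\bar{f^2}$ --- whereas you attempt a first-principles derivation, and that derivation as written contains a step that fails.

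Concretely: starting from $y(t)=A(d)x(t)+n(t)$ with the distance entering only through the amplitude $A(d)=\sqrt{KG_0P_L}\,(h_B^2+d^2)^{-\alpha/4}$, the score is $\frac{\partial \ln L}{\partial d}=\frac{A'(d)}{\sigma_N^2}\int\left(y(t)-A(d)x(t)\right)x(t)\,dt$, so the conditional Fisher information is $J(d)=\frac{\left(A'(d)\right)^2}{\sigma_N^2}\int x^2(t)\,dt$. The derivative energy $\int \dot{x}^2(t)\,dt$ --- and hence, via Parseval, the mean-square bandwidth $\bar{f^2}$ --- never appears: your ``standard ranging form'' does not follow from this observation model. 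The form $A^2(d)\,\bar{f^2}/\sigma_N^2$ arises only when the distance enters as a time shift, $y(t)=A\,x\!\left(t-d/c\right)+n(t)$, since then $\frac{\partial}{\partial d}\left[A\,x\!\left(t-d/c\right)\right]=-\frac{A}{c}\dot{x}\!\left(t-d/c\right)$ and Parseval produces $\bar{f^2}$. You flagged this yourself as ``the main obstacle,'' but it is not a normalization detail to be pinned down later: with the amplitude-only model the claimed formula is simply false, and it even has the wrong distance dependence, since $\left(A'(d)\right)^2\propto d^2(h_B^2+d^2)^{-\alpha/2-2}$. To close the gap you must either switch to the delay-based observation model before differentiating, or do what the paper does and invoke the classical time-of-arrival CRLB as a known result. (In fairness, the same tension exists in the paper itself, which narrates RSSI-based ranging but then uses the delay-type formula; its proof avoids the issue only by citing van Trees.)
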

\begin{proof}
The Fisher information for a given $d$ is~\cite{van2004detection}:
% \begin{equation}
% 	J_{\tau} = \frac{P_L}{c^2 \tau^2 N_0} \bar{f^2} 
% \end{equation}
\begin{equation}
	J_d = \frac{KG_0P_L}{(h_B^2+ d^2)^{\frac{\alpha}{2}} \sigma_N^2} \bar{f^2},
\end{equation}
where $\bar{f^2}= \frac{\int_{-\infty}^{\infty} (2 \pi f)^2 |X(f)|^2 df}{\int_{-\infty}^{\infty} |X(f)|^2 df}$ is the effective bandwidth of the signal. In our case, we assume that the signal has a flat spectrum~\cite{Destino2017on}, and accordingly, we have $\bar{f^2} = 1.25\pi^2B^2$. %\textcolor{blue}{This looks to be a strong assumption in mm-wave; do you have any references for this?}
Now using the distribution of $d$ from~\eqref{eq:dist}, the expectation of the Fisher information is calculated as:
 \begin{eqnarray}
 	J_D = \mathbb{E}_d \left[ J_d \right] = \frac{KG_0P_L  2 \lambda \bar{f^2}}{\sigma_N^2}  \int_1^\infty \frac{e^{-2 \lambda x}}{(h_B^2+x^2)^{\frac{\alpha}{2}}} d x.  %\nonumber \\
%     &=& \frac{P_L  2 \lambda \bar{f^2}}{N_0 } \left( e^{- 2 \lambda c} - 2 \lambda c \Gamma (0 , 2 \lambda c) \right)
 \end{eqnarray}
 Finally, the prior information can be calculated as:
\begin{align}
J_p &= \mathbb{E} \left[\log(f_d(x))\right] = \int_0^{\infty}   \log\left(f_d(x)\right)f_d(x) d x \nonumber \\
& = \int_0^{\infty} \log\left(2 \lambda  \exp\left(-2  \lambda x\right)\right) 2 \lambda \exp\left(-2 \lambda  x\right) dx \nonumber \\
%& = \log\left(2\lambda \right) - 2\lambda  \int_0^{\infty} 2 \lambda x \exp\left(-2\lambda x\right) dx \nonumber \\
& = \log\left(2\lambda \right) - 1\nonumber
\end{align}
This completes the proof.
\end{proof}
\begin{corollary}
For the special case of path loss exponent $\alpha = 2$, $J_D$ evaluates to~\eqref{eq:J_tau},
where $Ei$ is the exponential integral~\cite{pagurova1961tables}.
\begin{figure*}
\small
\begin{equation}
	J_D =  \frac{KG_0P_L  2 \lambda \bar{f^2}}{\sigma_N^2} \frac{i ( e^{-i 2 \lambda h} Ei(i 2 \lambda h) -e^{i 2 \lambda h} Ei(-i 2 \lambda h ) )}{2h} + 2\lambda \log\left(2\lambda \right) - 1 
    \label{eq:J_tau}
\end{equation}
\hrule
\end{figure*}
\end{corollary}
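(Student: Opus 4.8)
The plan is to start from the integral representation of $J_D$ in Lemma~\ref{lem:FIM}, specialize it to $\alpha = 2$, and evaluate the remaining integral in closed form. Putting $\alpha = 2$ collapses $(h_B^2 + x^2)^{-\alpha/2}$ to $(h_B^2 + x^2)^{-1}$, so up to the prefactor $\frac{KG_0P_L\,2\lambda\,\bar{f^2}}{\sigma_N^2}$ the whole task reduces to evaluating $\int_0^\infty \frac{e^{-2\lambda x}}{h_B^2 + x^2}\,dx$ (writing $h$ for $h_B$); taking the lower limit over the full support $[0,\infty)$ of the distance is what produces the clean exponential-integral form, while the additive prior term is inherited from the lemma.

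First I would factor the denominator over $\mathbb{C}$ and perform a partial-fraction split, $\frac{1}{h^2 + x^2} = \frac{1}{2ih}\left(\frac{1}{x - ih} - \frac{1}{x + ih}\right)$, turning the integral into two pieces of the form $\int_0^\infty \frac{e^{-2\lambda x}}{x \mp ih}\,dx$. In each piece the substitution $u = x \mp ih$ factors out a phase $e^{\mp 2i\lambda h}$ and normalizes the integrand to $\frac{e^{-2\lambda u}}{u}$; a further rescaling $t = 2\lambda u$ casts it as $\int_{\mp 2i\lambda h}^{\infty} \frac{e^{-t}}{t}\,dt = E_1(\mp 2i\lambda h)$. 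Rewriting via $E_1(z) = -Ei(-z)$ and collapsing the constant $\frac{1}{2ih} = \frac{-i}{2h}$ should assemble precisely the combination $\frac{i\left(e^{-i2\lambda h}Ei(i2\lambda h) - e^{i2\lambda h}Ei(-i2\lambda h)\right)}{2h}$ of~\eqref{eq:J_tau}.

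The main obstacle is treating the exponential integral at purely imaginary arguments carefully: I must track the branch of $E_1$ (equivalently $Ei$) along the rotated contour created by $u = x \mp ih$ and justify invoking $E_1(z) = -Ei(-z)$ away from the positive real axis. A reassuring check is that the result must be real, since the starting integral is real-valued; indeed $e^{i2\lambda h}Ei(-i2\lambda h)$ is the complex conjugate of $e^{-i2\lambda h}Ei(i2\lambda h)$ because $Ei(\bar z) = \overline{Ei(z)}$, so the bracketed difference is purely imaginary and the leading factor $i$ makes the expression real. I would close by restoring the prefactor and appending the prior-information term from Lemma~\ref{lem:FIM} to recover the full right-hand side of~\eqref{eq:J_tau}.
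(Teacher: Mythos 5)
Your evaluation is correct and follows essentially the route the paper implies: the corollary is stated without an explicit proof as the direct evaluation of the integral in Lemma~\ref{lem:FIM} at $\alpha=2$, and your complex partial-fraction split, the reduction of each piece to $E_1(\mp i2\lambda h)$, the conversion $E_1(z)=-Ei(-z)$, and the collapse of $\tfrac{1}{2ih}$ to $\tfrac{-i}{2h}$ reproduce exactly the bracketed combination in~\eqref{eq:J_tau}, with the conjugate-symmetry reality check being a sound way to handle the branch issue you flag. Two bookkeeping points you implicitly (and correctly) resolved in the paper's favor: the integration indeed runs over the full support $[0,\infty)$ (the lower limit $1$ in the lemma is a typo, as the stated closed form confirms), and the additive term in~\eqref{eq:J_tau} is the prior information $J_p=\log(2\lambda)-1$ of Lemma~\ref{lem:FIM} (so the displayed quantity is really the Bayesian information $J_B=J_D+J_p$, and the factor $2\lambda$ multiplying the logarithm in~\eqref{eq:J_tau} is another typo), so treating it as inherited from the lemma is the right call.
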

\vspace*{-0.4cm}
% \begin{proof}
% For the special case of $\alpha = 2$, we have:
%  \begin{align}
%  	J_D &=\mathbb{E}_d \left[ J \right] = \frac{KG_0P_L  2 \lambda \bar{f^2}}{N_0}  \int_0^\infty \frac{e^{-2 \lambda x}}{h^2+x^2} d x \nonumber \\
% %    =& \frac{P_L  2 \lambda \bar{f^2}}{N_0} \cdot \nonumber \\ & \left[ \frac{i (e^{i 2 \lambda h} Ei(-i 2 \lambda h - 2 \lambda x) - e^{-i 2 \lambda h} Ei(i 2 \lambda h - 2 \lambda x))}{2h} \right]_{0}^{\infty} \nonumber \\
%     &= \frac{KG_0P_L  2 \lambda \bar{f^2}}{N_0}  \frac{i ( e^{-i 2 \lambda h} Ei(i 2 \lambda h) -e^{i 2 \lambda h} Ei(-i 2 \lambda h ) )}{2h} \nonumber
%     \end{align}
Finally, the Bayesian information can be obtained as $J_B = J_D + J_P$. Consequently, the Bayesian CRLB (BCRLB) and Jeffrey's prior corresponding to the Bayesian information are calculated as $\frac{1}{J_B}$ and $\sqrt{J_B}$, respectively.
\begin{rmk}
Intuitively, higher the Jeffrey's prior (or lower the BCRLB) is, better the estimation efficiency will be. From \eqref{eq:J_tau}, we see that a higher Jeffrey's prior is facilitated by a larger value of $P_L$, i.e., a smaller $\beta$. %However, as we will see in the next section, that increasing $\beta$ reduces the SNR and rate coverage. 
\end{rmk}
% \end{proof}
% Hence, $J_d$ can be written as:
% \begin{align}
% 	&J_\tau = J_D + J_P = \nonumber \\
%     &\frac{P_L  2 \lambda \bar{f^2}}{N_0} \frac{i ( e^{-i 2 \lambda h} Ei(i 2 \lambda h) -e^{i 2 \lambda h} Ei(-i 2 \lambda h ) )}{2h} + 2\lambda \log\left(2\lambda \right) - 1 \nonumber
% \end{align}
\vspace*{-0.4cm}
\subsection{Coverage and Rate Analysis}
Based on the path-loss model of Section II-B, the SNR for the communication phase is:
\begin{align}
SNR_2  = \frac{P_D K g G_0}{\sigma_N^2}  ({d^2 + h_B^2})^{-\frac{\alpha}{2}}. \nonumber 
\end{align}
Accordingly, let us define the SNR coverage probability of the typical user at a threshold $\gamma$, as the probability that the SNR is greater than $\gamma$. It represents the fraction of the users under coverage in the network.
\begin{lemma}
The SNR coverage probability at a threshold of $\gamma$ is calculated as~\eqref{eq:CovP}.
\begin{figure*}
\small
\begin{align}
&\mathcal{P}_C(\gamma) = \sum_{n = 1}^{n_0}\left(-1\right)^{n+1} \binom {n_0}n2\lambda \exp\left(2\lambda - \frac{h_B^2 n\gamma \sigma_N^2}{P_DKG_0}\right) \left[\frac{\sqrt{\pi}}{2}\left(\sqrt{\frac{P_DKG_0}{n\gamma \sigma_N^2}} - \frac{P_DKG_0}{n\gamma \sigma_N^2}\text{erf}\left(\frac{h_B^2 n\gamma \sigma_N^2}{P_DKG_0}\right)\right)\right]
\label{eq:CovP}
\end{align}
\hrule
\end{figure*}
\end{lemma}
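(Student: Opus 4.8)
The plan is to compute the coverage probability as a nested expectation, first over the small-scale fading gain $g$ and then over the link distance $d$, specialising to the path-loss exponent $\alpha = 2$ as in the preceding corollary (the quadratic dependence on $d$ is what makes the distance integral tractable). Starting from $\mathcal{P}_C(\gamma) = \mathbb{P}[SNR_2 > \gamma]$ and the given expression for $SNR_2$, I would isolate the fading variable so that the coverage event reads $g > \frac{\gamma \sigma_N^2 (d^2 + h_B^2)^{\alpha/2}}{P_D K G_0}$. Conditioned on $d$, the inner probability is just the complementary CDF of $g$ evaluated at this distance-dependent threshold, so the whole quantity is $\mathbb{E}_d\big[\mathbb{P}[g > t(d)]\big]$ with $t(d)$ the threshold above.

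The next step exploits the Nakagami-$n_0$ assumption with unit variance, under which $g$ is Gamma distributed with integer shape $n_0$. I would invoke the standard tight bound on the Gamma CCDF (Alzer's inequality), whose binomial expansion gives $\mathbb{P}[g > t] \le \sum_{n=1}^{n_0} (-1)^{n+1}\binom{n_0}{n} e^{-n t}$; this is precisely the alternating binomial sum appearing in~\eqref{eq:CovP}, and it is the device that replaces the intractable incomplete-Gamma function by a finite sum of exponentials (and renders $\mathcal{P}_C$ an upper bound). Setting $\alpha = 2$, substituting $t(d) = \frac{\gamma \sigma_N^2 (d^2 + h_B^2)}{P_D K G_0}$, and taking the expectation over $d$ with the density in~\eqref{eq:dist}, I would interchange the (finite, hence unproblematic) sum with the integral over $x$ and factor out the distance-independent constant $\exp\big(-\tfrac{h_B^2 n \gamma \sigma_N^2}{P_D K G_0}\big)$, which reproduces the exponential prefactor in~\eqref{eq:CovP}.

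What remains, and what I expect to be the main obstacle, is the half-line Gaussian integral $\int_0^\infty \exp(-a x^2 - 2\lambda x)\,dx$ with $a = \tfrac{n\gamma\sigma_N^2}{P_D K G_0}$, arising from the $d^2$ term in the exponent together with the factor $2\lambda$ from $f_d$. I would evaluate it by completing the square, $a x^2 + 2\lambda x = a(x + \lambda/a)^2 - \lambda^2/a$, and substituting $u = \sqrt{a}\,(x + \lambda/a)$, which maps $[0,\infty)$ to $[\lambda/\sqrt{a},\infty)$ and produces a complementary error function. Writing $\mathrm{erfc} = 1 - \mathrm{erf}$ then splits the result into the two-term bracket of~\eqref{eq:CovP}, a constant $\tfrac{\sqrt{\pi}}{2}\,(n\gamma\sigma_N^2/P_D K G_0)^{-1/2}$ piece minus an $\mathrm{erf}$ piece. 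Reassembling the $2\lambda$, the pulled-out exponential, and this bracket, and summing over $n$, yields the stated formula; the only real care lies in bookkeeping the constants through the completion of the square and the change of variables.
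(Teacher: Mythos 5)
Your proposal is correct and takes essentially the same route as the paper's own proof: condition on the link distance, expand the Nakagami (Gamma) tail of $g$ via the alternating binomial sum, average over the distance density $f_d(x)=2\lambda e^{-2\lambda x}$, and evaluate the resulting Gaussian-type integral for $\alpha=2$ by completing the square (the paper itself stops at ``evaluating this integral completes the proof,'' so your last step is exactly the intended one). Note only that, writing $a = \tfrac{n\gamma\sigma_N^2}{P_D K G_0}$, a careful completion of the square gives $2\lambda\, e^{-a h_B^2+\lambda^2/a}\,\tfrac{\sqrt{\pi}}{2\sqrt{a}}\,\mathrm{erfc}\!\left(\lambda/\sqrt{a}\right)$, whose exponential factor and erf argument do not literally match the constants printed in~\eqref{eq:CovP}; your derivation is the sound one, and the discrepancy points to typographical errors in the paper's displayed formula rather than to a gap in your argument.
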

\begin{proof}
The SNR coverage probability is computed as follows:
\begin{align}
&\mathbb{P}\left(SNR_2 \geq \gamma \right) = \mathbb{P}\left(\frac{P_D g K G_0}{\sigma_N^2}  (\sqrt{d^2 + h_B^2})^{-\alpha}  \geq \gamma \right)\nonumber \\
& = \mathbb{P}\left(g \geq \frac{\gamma \sigma_N^2}{P_DKG_0 ({x^2 + h_B^2})^{-\frac{\alpha}{2}}}\right) \nonumber \\
& = \sum_{n = 1}^{n_0}\left(-1\right)^{n+1} \binom {n_0}n\mathbb{E}\left[\exp\left(-\frac{n\gamma \sigma_N^2}{P_DKG_0 ({x^2 + h_B^2})^{-\frac{\alpha}{2}}}\right)\right] \nonumber \\
& = \sum_{n = 1}^{n_0}\left(-1\right)^{n+1} \binom {n_0}n2\lambda \int_0^{\infty}\exp\left(-\frac{n\gamma \sigma_N^2 ({x^2 + h_B^2})^{\frac{\alpha}{2}}}{P_DKG_0}\right) \cdot \nonumber \\
& \hspace*{4cm}\exp\left(-2\lambda x\right) dx \nonumber
\end{align}
Evaluating this integral completes the proof.
\end{proof}
Similar to the SNR coverage probability, the rate coverage probability at a threshold $r_0$ is defined as the probability that the downlink data rate of the typical user is greater than $r_0$. 
\begin{corollary}
The rate coverage probability can be computed as:
\begin{align}
\mathcal{P}_R(r_0) = \mathbb{P}\left(R \geq r_0\right) &= \mathbb{P} \left(SNR_2 \geq 2^{\frac{r_0}{B}} -1\right) \nonumber\nonumber \\ & =  P_C\left( 2^{\frac{r_0}{B}} -1\right)
\end{align}
\end{corollary}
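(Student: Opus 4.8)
The plan is to reduce the rate coverage event to the SNR coverage event already characterized in the preceding lemma. The key input is the Shannon rate expression linking the instantaneous downlink rate to the communication-phase SNR, namely $R = B \log_2\!\left(1 + SNR_2\right)$, where $B$ is the operating bandwidth introduced in the path-loss model. Everything stochastic in the problem (the fading variable $g$ and the distance distribution $f_d$) has already been folded into $\mathcal{P}_C$, so the remaining work is purely a deterministic change of variables at the level of thresholds.

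First I would rewrite the rate coverage event $\{R \geq r_0\}$ as an equivalent event on $SNR_2$. Since $B \log_2(\cdot)$ is strictly increasing, the inequality $B \log_2(1 + SNR_2) \geq r_0$ holds if and only if $1 + SNR_2 \geq 2^{r_0/B}$, i.e.\ $SNR_2 \geq 2^{r_0/B} - 1$. This is a monotone manipulation of a deterministic map that preserves the underlying set, hence the probabilities coincide: $\mathbb{P}(R \geq r_0) = \mathbb{P}\!\left(SNR_2 \geq 2^{r_0/B} - 1\right)$.

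Second I would invoke the SNR coverage probability lemma, which furnishes a closed form for $\mathcal{P}_C(\gamma) = \mathbb{P}(SNR_2 \geq \gamma)$ at an arbitrary threshold $\gamma$. Setting $\gamma = 2^{r_0/B} - 1$ immediately yields $\mathcal{P}_R(r_0) = \mathcal{P}_C\!\left(2^{r_0/B} - 1\right)$, completing the argument. There is essentially no analytical obstacle here: the only points worth flagging are the modeling assumption that $R$ equals the Shannon capacity of the additive-noise link driven by $SNR_2$, and the fact that the threshold map $r_0 \mapsto 2^{r_0/B} - 1$ is strictly increasing (and maps into the SNR range over which $\mathcal{P}_C$ is defined), which is precisely what legitimizes the substitution without any additional conditions.
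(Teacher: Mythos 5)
Your proof is correct and matches the paper's approach exactly: the corollary is established by writing $R = B\log_2(1+SNR_2)$, inverting the monotone map to get the equivalent event $\{SNR_2 \geq 2^{r_0/B}-1\}$, and then invoking the closed-form SNR coverage expression $\mathcal{P}_C(\gamma)$ at the threshold $\gamma = 2^{r_0/B}-1$. The only addition you make is to explicitly flag the monotonicity and the Shannon-capacity modeling assumption, which the paper leaves implicit.
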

%  \begin{figure}
%  \centering
%  \begin{tikzpicture}[scale = .2]
%  \draw [fill=black] (0,0) circle (8pt);
 %\draw [-,line width=1.2pt] (0,0) -- (3, -8);
 %\draw [-,line width=1.2pt] (0,0) -- (-3, -8);
 %\draw [-,line width=1.2pt] (3,-8) -- (-3, -8);
 %\draw [-,line width=1.2pt] (2.625,-7) -- (-2.25,-6);
 %\draw [-,line width=1.2pt] (-2.25,-6) -- (1.875,-5);
 %\draw [-,line width=1.2pt] (1.875,-5) -- (-1.5,-4);
 %\draw [-,line width=1.2pt] (-1.5,-4) -- (1.125,-3);
 %\draw [-,line width=1.2pt] (1.125,-3) -- (-.7,-2);
%  \draw [-,line width=1.2pt] (-3,-8) -- (20,-8); 
%  \draw [fill=black] (15,-8) circle (8pt);
%  \draw [fill=black] (0,-8) circle (8pt);
%  \draw [-,line width=1.2pt] (0,0) -- (18.8,-8);
%  \draw [-,line width=1.2pt] (0,0) -- (11.4,-8);
%  \draw[-,line width=1.2pt] (4.86,-3.4) parabola (5.5,-2.2);
%  \draw[color=black] (6.25,-3.5) node[font = \normalsize] {$\theta$};
% \draw[color=black] (-1.5,0) node[font = \normalsize] {BS};
%  \draw[color=black] (14,-7) node[font = \normalsize] {UE};
%  \draw[decoration={brace,mirror,raise=8pt},decorate, line width=1.2pt]
%    (0,-8) -- node[below=8pt] {$\hat{d}$} (15,-8);
%  \draw[decoration={brace,mirror,raise=8pt},decorate, line width=1.2pt]
%    (11.4, -8.5) -- node[below=8pt] {$D_0$} (18.8, -8.5);
%  \draw[decoration={brace,mirror,raise=8pt},decorate, line width=1.2pt]
%    (-3, 0) -- node[left=8pt] {$h_B$} (-3, -8);
 % \draw [fill=black] (20,-10) circle (8pt);
 % \draw[color=black] (A) to[bend left] (B);
%  \end{tikzpicture}
%  \label{fig:sys}
%  \caption{Coverage region for a beamwidth $\theta$.}
% % \vspace*{-0.5cm}
%  \end{figure} 
\vspace*{-0.7cm}
\subsection{Beam Misalignment Error}
A BS with an antenna beamwith $\theta$, serving a user located at distance $d$, covers a region of length $D_0$ on the ground (see Figure 1). Using simple trigonometric calculations, we have:
\begin{align}
D_0 = \frac{2\tan\left(\frac{\theta}{2}\right)\left[1 + \frac{d^2}{h_B^2}\right]}{1 - \frac{d^2}{h_B^2}\tan^2\frac{\theta}{2}}. \nonumber
\end{align}
Once the localization procedure and the corresponding exchange of user-BS control signals is performed, beam-misalignment can occur in the absence of dynamic beam-alignment on both sides of the radio link. Assuming that the user's antenna is always oriented towards the BS, or equivalently, in case the user is operating with an omni-directional antenna, beam-misalignment will occur in case the distance of the user on the ground is more than $\frac{D_0}{2}$ from the estimated position. 

Let us assume that the estimation error for the UE localization is symmetric about its mean. Consequently, we bound the probability of the beam-misalignment as follows:
\begin{lemma}
The probability of beam-misalignment for a user located at a distance $d$ from the serving BS is bounded as $\frac{\text{BCRLB}}{D_0}$.
\begin{proof}
\begin{align}
\vspace*{-0.4cm}
\mathcal{P}_{MA}(d) = \mathbb{P}\left(|d - \hat{d}| \geq \frac{D_0}{2}\right) \stackrel{(a)}{\leq} \frac{2\sigma^2}{D_0} \stackrel{(b)}{=} \frac{2 \cdot \text{BCRLB}}{D_0},
\end{align}
where $\hat{d}$ is the estimated distance of the user. Here (a) follows from Markov's inequality assuming $\sigma^2$ as the variance of the positioning error. The step (b) occurs for an minimum-variance unbiased estimator (MVUE).
\vspace*{-0.3cm}
\end{proof}
\label{lem:missalignment}
\end{lemma}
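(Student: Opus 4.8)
The plan is to read the misalignment event off the beam geometry, reduce its probability to a moment tail bound on the localization error, and then replace that moment by the Bayesian CRLB established in Lemma~\ref{lem:FIM}. First I would fix the geometry: the serving beam of beamwidth $\theta$ illuminates a ground segment of length $D_0$ centred on the estimated position $\hat d$, so the link stays aligned precisely when the true ground distance $d$ lies within $D_0/2$ of $\hat d$. The complementary (misalignment) event is therefore $\{\,|d-\hat d|\ge D_0/2\,\}$, and $\mathcal P_{MA}(d)=\mathbb P(|d-\hat d|\ge D_0/2)$ with no approximation incurred at this stage.

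The second step controls this tail. Taking $\hat d$ to be unbiased with mean $d$ and invoking the assumed symmetry of the error about its mean to fold the two tails into a single one-sided estimate, I would apply Markov's inequality to the non-negative deviation $|d-\hat d|$. This produces a bound of the form $c\,\sigma^2/D_0$, where $\sigma^2$ denotes the variance of the positioning error and $c$ is the leading constant fixed by the particular moment inequality employed; the essential content is that the tail decays like the error spread divided by $D_0$, so that a wider beam (larger $D_0$) tolerates a larger ranging error.

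The decisive step is the identification of $\sigma^2$ with the bound of Lemma~\ref{lem:FIM}. For a minimum-variance unbiased estimator the achievable variance meets the Cram\'er--Rao bound with equality, so $\sigma^2=\text{BCRLB}=1/J_B$ with $J_B=J_D+J_p$. Substituting this identity into the tail bound of the previous step yields the claimed $\mathcal P_{MA}(d)\le \text{BCRLB}/D_0$, and because $D_0$ increases with $\theta$ while $J_B$ increases with the localization power $P_L=(1-\beta)P$, the expression already encodes the positioning--beamwidth trade-off that Section~\ref{sec:NRD} will exploit.

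I expect the crux to be this last step rather than the tail estimate: equating the estimator variance with the BCRLB presupposes that an efficient (MVUE) estimator exists and attains the Bayesian bound, which holds only under the regularity and Gaussian-noise assumptions of the signal model of Section~\ref{sec:MR}. I would therefore state this hypothesis explicitly, since it is exactly where the inequality is tightest and where the argument leans on the estimation-theoretic assumptions rather than on the geometry.
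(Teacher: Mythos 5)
Your proposal matches the paper's proof essentially step for step: you define the misalignment event as $\{|d-\hat d|\ge D_0/2\}$, bound its probability via Markov's inequality (using the assumed symmetry of the error) in terms of the variance $\sigma^2$ over $D_0$, and then substitute $\sigma^2 = \text{BCRLB}$ under the MVUE assumption, which is exactly the chain of steps (a) and (b) in the paper. Your closing remark correctly identifies the MVUE/efficiency assumption as the load-bearing hypothesis, which the paper likewise invokes without further justification.
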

\begin{corollary}
The mean misalignment error is then bounded by taking the expectation over d, i.e., $\bar{\mathcal{P}}_{MA} = \mathbb{E}_d\left[\mathcal{P}_{MA}(d)\right] \leq \mathbb{E}_d\left[\frac{2 \cdot \text{BCRLB}}{D_0}\right]$.
\end{corollary}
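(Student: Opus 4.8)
The plan is to lift the pointwise bound of Lemma~\ref{lem:missalignment} to an averaged bound by invoking the monotonicity of the expectation operator. First I would recall that Lemma~\ref{lem:missalignment} establishes, for each fixed distance $d$, the deterministic inequality $\mathcal{P}_{MA}(d) \leq \frac{2 \cdot \text{BCRLB}}{D_0}$, in which the numerator $\text{BCRLB} = 1/J_B$ is a constant independent of $d$ (it is an already-averaged Fisher-information quantity assembled from $J_D$ and $J_p$), while the denominator $D_0$ depends on $d$ through the trigonometric expression derived above.

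Since this inequality holds for every admissible value of $d$, I would then apply $\mathbb{E}_d[\cdot]$ to both sides, integrating against the exponential density $f_d(x) = 2\lambda\exp(-2\lambda x)$ from~\eqref{eq:dist}. By monotonicity of expectation --- if $\mathcal{P}_{MA}(d) \leq g(d)$ pointwise then $\mathbb{E}_d[\mathcal{P}_{MA}(d)] \leq \mathbb{E}_d[g(d)]$ --- the ordering is preserved, which immediately yields the claimed bound $\bar{\mathcal{P}}_{MA} = \mathbb{E}_d[\mathcal{P}_{MA}(d)] \leq \mathbb{E}_d\!\left[\frac{2 \cdot \text{BCRLB}}{D_0}\right]$. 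Because the BCRLB is constant it factors out, reducing the right-hand side to $2 \cdot \text{BCRLB} \cdot \mathbb{E}_d[1/D_0]$.

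The inequality itself is essentially immediate from monotonicity, so the only genuine care needed lies in the domain of integration. The expression for $D_0$ is geometrically valid --- and positive --- only while $1 - \frac{d^2}{h_B^2}\tan^2\frac{\theta}{2} > 0$, i.e.\ while $d < h_B\cot\frac{\theta}{2}$; beyond this the beam footprint no longer closes on the ground and the formula changes sign. Thus the main obstacle I anticipate is restricting the expectation to the region where the footprint is well defined and confirming that the resulting integral $\int \frac{1}{D_0(x)}\,2\lambda\exp(-2\lambda x)\,dx$ converges. This is also where any subsequent analytical or numerical evaluation of $\mathbb{E}_d[1/D_0]$ would be concentrated, the corollary itself asserting only the averaged ordering rather than a closed form.
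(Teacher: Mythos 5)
Your proposal is correct and matches the paper's (implicit) argument: the paper states this corollary without proof precisely because it follows immediately from Lemma~\ref{lem:missalignment} by monotonicity of expectation over $d$, exactly as you argue. Your additional observations --- that the BCRLB is a $d$-independent constant (being built from the already-averaged $J_D$ and $J_p$) and so factors out, and that the expectation over $D_0$ must be restricted to $d < h_B\cot\frac{\theta}{2}$ where the footprint formula is positive --- go beyond what the paper records and are sound refinements rather than a different route.
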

In the next section, we prescribe guidelines for an operator to choose an operating beamwidth for limiting this error.
%, and a slice-aware selection of $\beta$, so as to support distinct services based on their QoS requirement.
\vspace*{-0.2cm}
\section{Numerical Results and Discussion}
\label{sec:NRD}
In this section, we present some numerical results based on the analytical framework presented in this paper. First, we show how the SNR coverage probability changes with the power splitting factor ($\beta$). Subsequently, we study the trade-off between localization and data rate as a function of $\beta$. Then, with the help of two examples, we describe our power partitioning scheme. In the following analysis, we assume $G_0 = 10$ dB and $n_0 = 3$.
\vspace*{-0.3cm}
\subsection{SINR Coverage Probability}
In Figure~\ref{fig:SINR} we plot the SNR coverage probability with respect to $\beta$ at a threshold of $\gamma = -10$ dB. As $\beta$ increases, the SINR coverage probability increases due to more power allocated to the data transmission phase. This provides a guideline to select a minimum operating $\beta$ for a given deployment density, such that the outage is limited. As an example, to limit a service outage below 20$\%$, with a BS deployment of 1 km$^{-1}$ and a power budget of $P = 25$ dBm, the minimum $\beta$ is 0.15, whereas with a power budget of $P = 20$ dBm, the minimum $\beta$ is 0.5.

More interestingly, this analysis provides the operator dimensioning rules in terms of the deployment density of the BSs for a given power budget. For example, in order to support services with an outage tolerance of 10$\%$, with a power budget of 20 dBm, a deployment density of 1 km$^{-1}$ does not suffice, and the operator must necessarily deploy more BSs.
\begin{figure}
\centering
\includegraphics[width = 8 cm, height = 4cm]{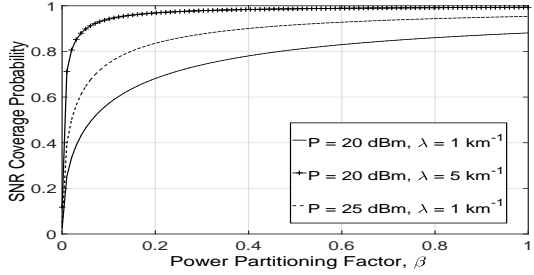}
\caption{SNR coverage probabilities for a threshold of $\gamma = -10$ dB vs the fractional power split for different $\lambda$.}
\label{fig:SINR}
\vspace*{-0.5cm}
\end{figure}
\vspace*{-0.2cm}
\subsection{Misalignment Error}
In Figure~\ref{fig:miss} we plot the mean beam-misalignment bound with respect to the beamwidth of the transmit antenna of the BSs. As expected, the larger the beamwidth and the higher the SNR, the lower the misalignment. For example, for a tolerable misalignment of 0.02$\%$ with SNR = -15 dB and $\lambda = 5$ km$^{-1}$, the minimum antenna beamwidth should be 8 degrees.
\begin{figure}
\centering
\includegraphics[width = 8 cm, height = 4cm]{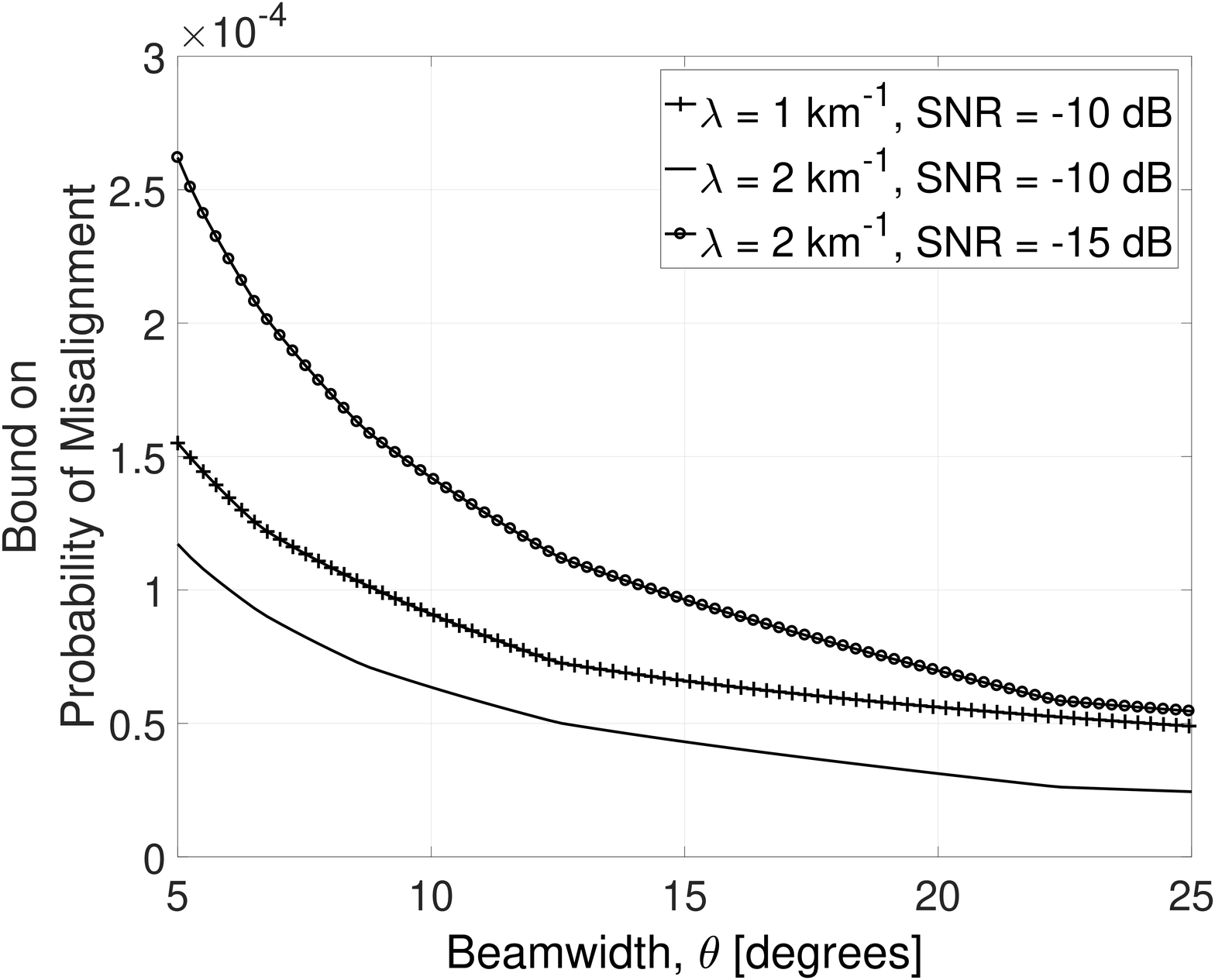}
\caption{Beam misalignment error with respect to beamwidth of the transmit antenna.}
\label{fig:miss}
\vspace*{-0.5cm}
\end{figure}
\vspace*{-0.5cm}
\subsection{Distance Estimation-Data Rate Trade-off}
In Figure~\ref{fig:FIM_Thru_diff_P} we plot the trade-off between the efficiency of the distance estimation of the user, represented by its Jeffrey's prior\footnote{The estimation error is calculated as the inverse of the Jeffrey's prior.} and the rate coverage probability at a rate threshold of 500 Mbps. Each position in the plot for a given deployment parameter corresponds to a particular $\beta$. Thus for a given power budget, deployment density, and operating beamwidth, the performance of the system is determined by a particular operating characteristic, i.e., a trade-off between the positioning efficiency and data rate performance. For a particular operating characteristic, as we increase $\beta$, we improve the rate coverage probability at the cost of degrading the localization efficiency; whereas, decreasing $\beta$ has the opposite effect. Accordingly, there exists a trade-off between the distance estimation and the data rate performance of the system. In the next subsection, we propose a scheme for selecting $\beta$ based on a given operating beamwidth. 
\begin{figure}
\centering
\includegraphics[width = 8 cm, height = 4cm]{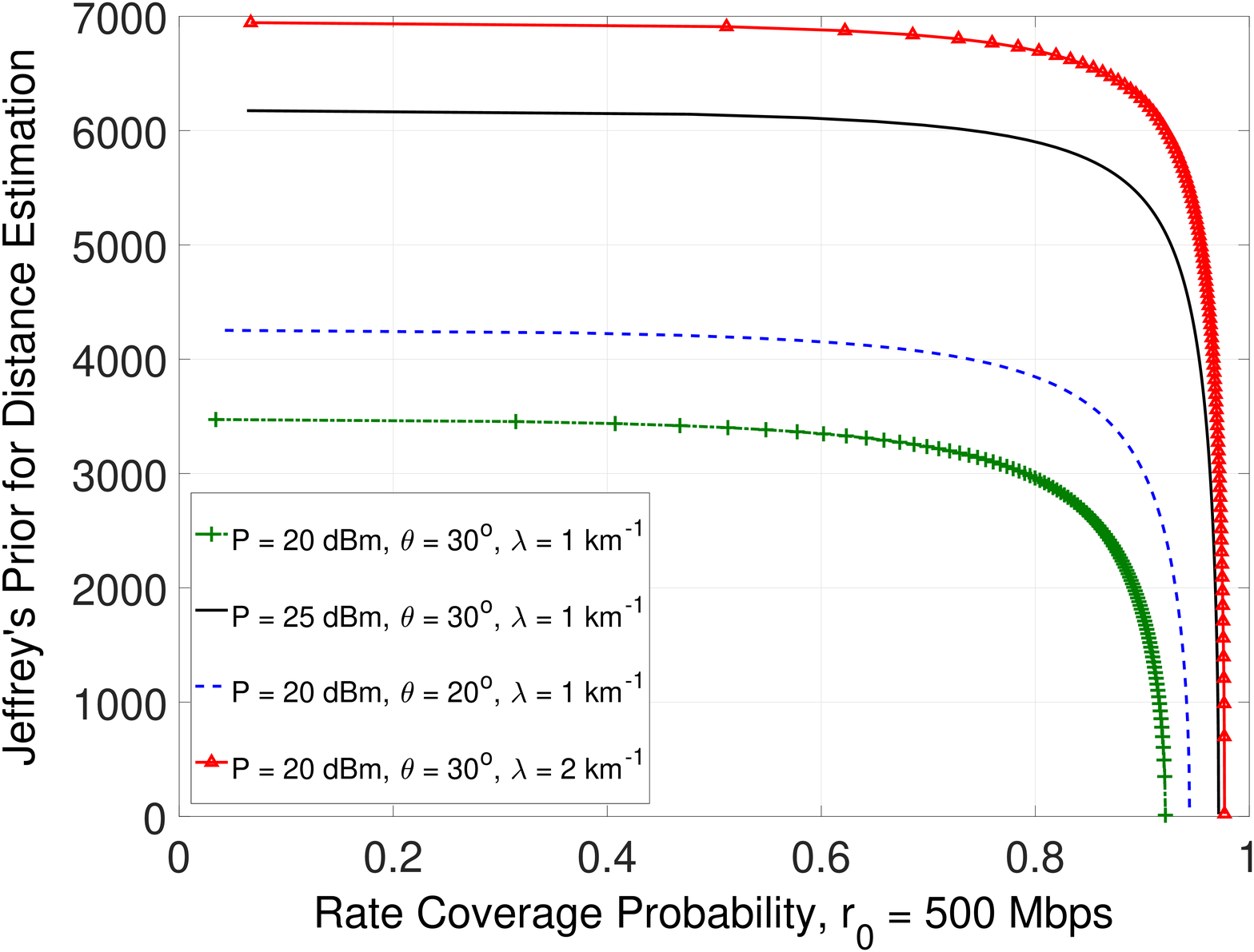}
\caption{Distance estimation error vs physical data rate for different power budget.}
\label{fig:FIM_Thru_diff_P}
\vspace*{-0.5cm}
\end{figure}
\begin{figure}
\centering
\includegraphics[width = 8 cm, height = 4cm]{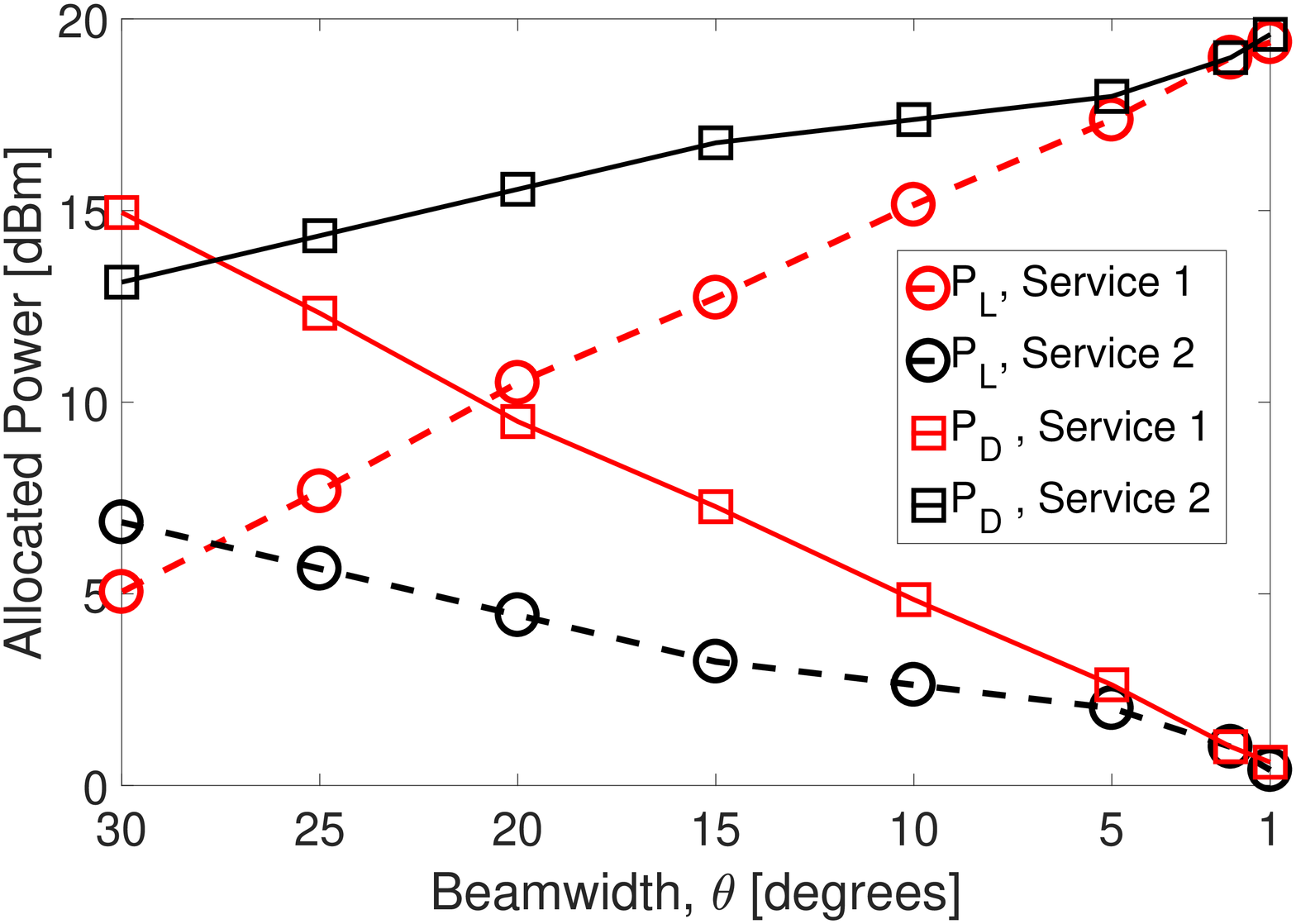}
\caption{Power allocation for the two services for different operating beamwidths.}
\label{fig:final}
\vspace*{-0.5cm}
\end{figure}
\vspace*{-0.4cm}
\subsection{QoS Aware Network Parameter Setting}
We propose the following scheme for setting the network parameters. First, for a given power budget, deployment density and operating beamwidth, the corresponding operating characteristic (i.e., a trade-off curve from Figure~\ref{fig:FIM_Thru_diff_P}) is selected. Next, for the chosen operating characteristic, the minimum $\beta_{min}$ is chosen to satisfy the required outage constraint. Then, for a given positioning error constraint, the maximum value of $\beta$, i.e., $\beta_{max}$ is selected. Finally, the operating $\beta_{min} \leq \beta \leq \beta_{max}$ is selected to address the specific QoS requirements. Accordingly, the misalignment error varies for the chosen $\beta$ and the operating $\theta$.
  
  In what follows, we explain the total power distribution based on the QoS requirements, for a varying degree of misalignment. We assume a network with $\lambda$ = 2 km$^{-1}$ and a BS power budget of $P = 20$ dBm providing two services:
\begin{itemize}
\item Service 1 requires maximum positioning efficiency and a tolerable outage of 10$\%$.
\item Service 2 requires maximum data-rate and a tolerable positioning error of 5e-4 m.
\end{itemize}
We study the power partitioning scheme under different operating beamwidths. In practice, the operating beamwidth may be a system requirement for the first generation mm-wave networks. Intuitively, for a less stringent misalignment requirement, the operating beamwidth can be smaller. This can either be exploited to improve the positioning or enhance the data-rate, as per the required QoS.

For service 1, the operator should set $\beta$ equal to the $\beta_{min}$ corresponding to the $\theta$ that satisfies the misalignment requirement. Then, if the operating $\theta$ can be decreased, more power can be allotted for positioning and the one used for data communication $P\beta_{min}$ is reduced, accordingly. 
On the other hand, the operator should set $\beta$ equal to the $\beta_{max}$ corresponding to the $\theta$ that satisfies the misalignment requirement. Therefore, a thinner beamwidth facilitates larger power allocation for data communication ($P_D$ increases).
The stark difference in the two examples lies in the fact that the advantage of operating with a thinner beamwidth is exploited differently. With decreasing $\theta$, for a positioning service, $P_L$ increases and $P_D$ decreases, whereas the opposite is true for the high data-rate services (see Figure.~\ref{fig:final}).

It is worth mentioning that the inter-dependence of $\beta$ and $\theta$ for controlling the positioning performance and the misalignment error is not trivial. As an example, for a required misalignment constraint or for a required positioning error constraint, there exist non-unique $\left(\theta, \beta\right)$ pairs. 
%For reducing the misalignment, either $\theta$ should be increased, or $\beta$ should be decreased. However, increasing $\theta$ has an adverse effect on the BCRLB that can be circumvented by further decreasing $\beta$.
%This calls for suitable algorithms to obtain an \textit{optimal} $\left(\theta, \beta\right)$ pair. 
Furthermore, it may happen that for a given $\theta$ and $P$, no feasible $\beta$ exists that satisfies the positioning and misalignment constraints simultaneously, thereby necessitating a higher BS power budget. This interesting trade-off and the associated optimization problem will be treated in a future work.
\vspace*{-0.2cm}
\section{Conclusion}
In this paper we characterized a mm-wave system deployed to support positioning and broadband services simultaneously. Specifically, we introduced a power-partitioning based mechanism that enables the mm-wave BS to satisfy different localization and data-rate requirements. In this context, we derived dimensioning rules in terms of the density of BSs required to limit outage probability. Then we provided the operator with a beamwidth selection guideline to limit the misalignment probability. Finally, we studied the trade-off between the localization efficiency and the downlink data rate, and consequently, presented a scheme for partitioning the transmit power depending on the service requirements.
\label{sec:Con}
\vspace*{-0.35cm}
	\bibliography{refer.bib}
	\bibliographystyle{IEEEtran}

\end{document}